\documentclass[11pt,a4paper]{article}

\usepackage[colorlinks=true,urlcolor=Blue,citecolor=Green,linkcolor=BrickRed]{hyperref}
\usepackage[usenames,dvipsnames]{xcolor}
\usepackage[utf8]{inputenc}
\usepackage{algorithm}
\usepackage{algorithmicx}
\usepackage[noend]{algpseudocode}
\usepackage{cite}
\usepackage[OT4]{fontenc}
\usepackage{amsthm}
\usepackage{amssymb}
\usepackage{amsmath}
\usepackage{amsfonts}
\usepackage{todonotes}
\usepackage{authblk}
\usepackage{fullpage}

\bibliographystyle{plainurl}

\title{A Family of Approximation Algorithms for the Maximum Duo-Preservation String Mapping Problem}
\date{}
\author[1]{Bartłomiej Dudek}
\author[1,2]{Paweł Gawrychowski}
\author[1]{Piotr Ostropolski-Nalewaja}
\affil[1]{Institute of Computer Science, University of Wrocław, Poland}
\affil[2]{University of Haifa, Israel}

\algnewcommand\algorithmicforeach{\textbf{for each}}
\algdef{S}[FOR]{ForEach}[1]{\algorithmicforeach\ #1\ \algorithmicdo}

\newcommand{\Oh}{{O}}
\newcommand{\ALG}{ALG}
\newcommand{\OPT}{OPT}
\newcommand{\eps}{\varepsilon}
\newcommand{\CC}{\mathcal{C}}
\newcommand{\ddd}{\ldots}
\newcommand{\pair}[2]{\langle {#1},{#2}\rangle}
\newcommand\bigforall{\mbox{\Large $\mathsurround0pt\forall$}}
\newcommand{\close}{\mathsf{Close}}
\newcommand{\confl}{\mathsf{Overlap}}
\newcommand{\gaps}{\mathsf{gaps}}

\newtheorem{theorem}{Theorem}[section]
\newtheorem{lemma}[theorem]{Lemma}
\newtheorem{corollary}[theorem]{Corollary}

\begin{document}

\maketitle

\begin{abstract}  
In the Maximum Duo-Preservation String Mapping problem we are given two strings and wish
to map the letters of the former to the letters of the latter so as to maximise the number of
duos. A duo is a pair of consecutive letters that is mapped to a pair of consecutive letters
in the same order. This is complementary to the well-studied Minimum Common String Partition
problem, where the goal is to partition the former string into blocks that can be permuted
and concatenated to obtain the latter string. 

Maximum Duo-Preservation String Mapping is APX-hard. After a series of
improvements, Brubach [WABI 2016] showed a polynomial-time $3.25$-approximation algorithm.
Our main contribution is that for any $\epsilon>0$ there exists a polynomial-time
$(2+\epsilon)$-approximation algorithm. Similarly to a previous solution by Boria et al.
[CPM 2016], our algorithm uses the local search technique. However, this is used only
after a certain preliminary greedy procedure, which gives us more structure
and makes a more general local search possible.
We complement this with a specialised version of the algorithm that achieves $2.67$-approximation
in quadratic time.
 \end{abstract}
 
 \section{Introduction}

A fundamental question in computational biology and, consequently, stringology, is comparing
similarity of two strings. A textbook approach is to compute the edit distance, that is, the
smallest number of operations necessary to transform one string into another, where every
operation is inserting, removing, or replacing a character. While this can be efficiently
computed in quadratic time, a major drawback from the point of view of biological applications
is that every operation changes only a single character. Therefore, it makes sense to
also allow moving arbitrary substrings as a single operation to obtain edit distance with
moves. Such relaxation makes computing the smallest number of
operations NP-hard~\cite{Shapira2002}, but Cormode and Muthukrishnan~\cite{Cormode2002}
showed an almost linear-time $\Oh(\log n \cdot \log^{*}n)$-approximation algorithm.
The problem is already interesting if the only allowed operation is moving a substring.
This is usually called the Minimum Common String Partition (MCSP). Formally, we are
given two strings $X$ and $Y$, where $Y$ is a permutation of $X$. The goal is to cut $X$ into
the least number of pieces that can be rearranged (without reversing) and concatenated
to obtain $Y$.

MCSP is known to be APX-hard~\cite{GoldsteinHardness}. Chrobak et al.~\cite{Chrobak2005}
analysed performance of the simple greedy approximation algorithm 
that in every step extracts the longest common substring from the input strings and Kaplan and
Shafrir~\cite{Kaplan} further improved their bounds. This simple greedy algorithm can be
implemented in linear time~\cite{GoldsteinLewenstein2014} and was further tweaked to obtain better practical results~\cite{He2007}.
Also, an exact exponential
time algorithm~\cite{Fu2011} and different parameterizations were 
considered~\cite{Jiang2012,Bulteau2013,Bulteau2014,Damaschke2008}.

There was also some interest in the complementary problem called the Maximum
Duo-Preservation String Mapping (MPSM), introduced by Chen et al.~\cite{Chen2014}.
The goal there is to map the letters of $X$ to the letters
of $Y$ so as to maximise the number of preserved duos. A duo is a pair of consecutive letters, and
a duo of $X$ is said to be preserved if its pair of consecutive letters is mapped to a pair of
consecutive letters of $Y$ (in the same order).
MCSP and MPSM are indeed complementary, as one can think of preserving a duo
as not splitting its two letters apart to see that the number of preserved duos and the number
of pieces add up to $|X|$.
Of course, this says nothing about the relationship between the approximation
guarantees for both problems.
Chen et al.~\cite{Chen2014} designed a $k^{2}$-approximation algorithm based on
linear programming for the restricted version of the problem, called $k$-MPSM, where each
letter occurs at most $k$ times. This was soon followed by an APX-hardness proof
of 2-MPSM and a general 4-approximation algorithm provided by Boria et al.~\cite{Boria2014}. 
The approximation ratio was then improved to 3.5~\cite{DuaItalians} using a particularly
clean argument based on local search. Finally, Brubach~\cite{Wabi16} obtained a 3.25-approximation,
and Beretta et al.~\cite{Beretta} considered parameterized tractability.

Our main contribution is a family of polynomial-time approximation algorithms
for MPSM: for any $\eps > 0$, we show a polynomial-time $(2+\eps)$-approximation
algorithm. We complement this with a specialised (and simplified) version of the algorithm that
achieves 2.67-approximation in quadratic time, which already improves on the approximation
guarantee and the running time of the previous solutions, as the running time of the
3.5-approximation was $\Oh(n^{4})$.
At a high level, we also apply local search, that is, we iteratively try to slightly change
the current solution as long as such a change leads to an improvement. The intuition
is that being unable to find such local improvement should imply a $(2+\eps)$-approximation
guarantee. This requires considering larger and larger neighbourhoods of the current solution
for smaller and smaller $\eps$ and seems problematic already for $\eps=1$.  To overcome
this, we apply local search only after a certain preliminary greedy procedure, which gives
us more structure and makes a more general local search possible.

\section{Preliminaries}

In the Maximum Duo-Preservation String Mapping (MPSM) we are given two strings $X$ and
$Y$, where $Y$ is a permutation of $X$. The goal is to map the letters of $X$ to the letters
of $Y$ so as to maximise the number of preserved duos. A duo is a pair of consecutive letters, and
a duo of $X$ is said to be preserved if its pair of consecutive letters is mapped to a pair of
consecutive letters of $Y$ (in the same order). This can be restated by
creating a bipartite graph $G=(A\dot{\cup}B,E)$, where $n=|X|-1=|A|=|B|$ and 
$A= \{a_1,a_2,\ddd,a_n\}$ and $B= \{b_1,b_2,\ddd,b_n\}$. Node $a_{i}$ corresponds
to duo $(X[i],X[i+1])$ and similarly $b_{i}$ corresponds to $(Y[i],Y[i+1])$.
Two nodes are connected with an edge if their corresponding duos are the same,
that is, $E=\{(a_i,b_j): X[i]=Y[j] \text{ and }  X[i+1]=Y[j+1]\}$. See Figure~\ref{fig:graph}.

\begin{figure}[h]
\begin{centering}
\includegraphics{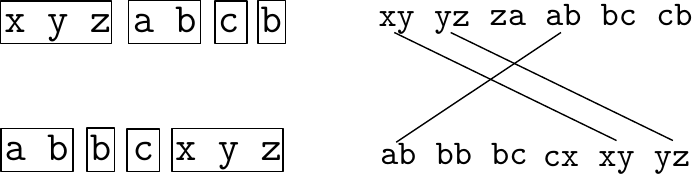}
\caption{An optimal solution of MCSP for strings \texttt{xyzabcb} and \texttt{abbcxyz} (left).
It corresponds to a solution of MPSM, where the mapping preserves duos $(x,y),(y,z)$, and $(a,b)$ (right).}
\label{fig:graph}
\end{centering}
\end{figure}

Now, we want to find a maximum matching in $G$ that corresponds to a proper mapping of letters between the strings, that is, such that every two consecutive mapped duos (consisting of three consecutive letters)
are mapped to two consecutive duos (in the same order). It is not necessary that all duos are mapped.
Formally, a matching $M$ is called consecutive if every two neighbouring nodes are either matched to two neighbouring nodes (preserving the order) or at least one of them is unmatched:
\[\bigforall_{i,j,j'\in\{1..n\}}\big(\pair{a_i}{b_j}\in M \wedge \pair{a_{i+1}}{b_{j'}} \in M\big)\Rightarrow \big(j'=j+1\big)\]
and a symmetric condition holds for the other side of the graph.
Even though the graph $G$ obtained as described above from an instance of MPSM
has some additional structure, we focus only on the more general problem
where the given bipartite graph $G=(A\dot{\cup}B,E)$ is arbitrary
and we are looking for a consecutive matching of maximum cardinality.
This was called the Maximum Consecutive Bipartite Matching (MCBM) by Boria et al.~\cite{Boria2014}.

\paragraph{Definitions.}
We say that two edges $\pair{a_i}{b_j}$ and $\pair{a_{i'}}{b_{j'}}$ are overlapping if
$|i-i'| \le 1 $ or $|j-j'|\le 1$. 
Given a consecutive matching $M$, we define
a streak to be a maximal (under inclusion) set of \emph{consecutive} edges
$e_1,e_2,\ddd,e_k$, such that for some $p,q$ we have that $e_i=\pair{a_{p+i}}{b_{q+i}}$
for all $i=1,2,\ldots,k$. See Figure~\ref{fig:confl_and_streak}.
Note that from the definition, $e_i$ overlaps with itself, $e_{i-1}$ and $e_{i+1}$ (assuming that
these edges exist).
This notion is extended to sets of edges: $S_1$ overlaps with $S_2$ if there exist
$e_1\in S_1, e_2\in S_2$ such that $e_1$ overlaps with $e_2$. We can similarly define overlaps
between an edge and a set of edges. Note that every consecutive matching $M$ can be uniquely
decomposed into a set of streaks such that no two of them are overlapping with each other.

\begin{figure}[h]
\begin{centering}
\includegraphics{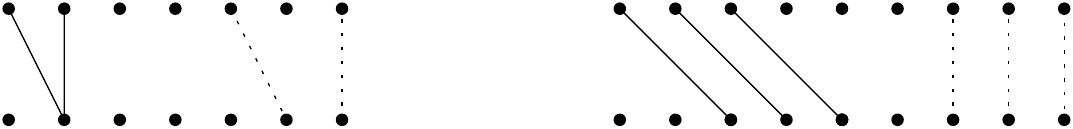}
\caption{Two pairs of overlapping edges (left) and decomposition of a consecutive matching into streaks (right).}
\label{fig:confl_and_streak}
\end{centering}
\end{figure}

\section{Greedy Algorithm}\label{greedy_alg}

Consider a simple greedy procedure, that in every step takes the
longest possible streak from $G$ and, if the streak consists of at least $k$ edges, adds
it to the solution. See Algorithm~\ref{alg:greedy}.

\begin{algorithm}[H]
\begin{algorithmic}[1]
  \Function{Greedy}{$k$}
  \State $\ALG := \emptyset$
  \While{\texttt{true}}
    \State $s:= $ the largest streak in $G$
    \If{$|s|<k$}
    \State \textbf{break}
    \EndIf
    \State remove $s$ and all edges overlapping with $s$ from $G$
    \State $\ALG := \ALG \cup s$
  \EndWhile
  \State \Return $\ALG$
  \EndFunction  
\end{algorithmic}
\caption{Choosing the largest possible streak greedily.}
\label{alg:greedy}
\end{algorithm}

To analyse quality of the returned solution, we fix an optimal solution $\OPT$ and would like to compare $|\ALG|$ with $|\OPT|$.
Let $s_i$ be the streak that was removed in the $i$-th step of the algorithm
and $o_i$ be the set of edges from $\OPT$ that are overlapping with $s_i$, but
were not overlapping with $s_1,s_2,\ldots,s_{i-1}$.  In other words,
$o_i$ consists of those edges from $\OPT$ that
after $i-1$ steps of the algorithm still could have been added to the
solution, but are no longer available after the $i$-th step.
Note that $o_i$ contains all the edges of $\OPT\cap s_i$, because every edge overlaps with itself.
Observe that $|o_i| \leq 2|s_i| + 4$ as there can be at most $|s_i|+2$ edges from $o_i$
overlapping with $s_i$ at each side of $G$.
Moreover, even a stronger property holds:

\begin{lemma}
\label{Lemma_oi}
$|o_i| \leq 2|s_i| + 2$.
\end{lemma}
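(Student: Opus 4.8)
The plan is to refine the per-side counting that gave the weaker bound $|o_i|\le 2|s_i|+4$, improving the contribution of each side of $G$ from $|s_i|+2$ to $|s_i|+1$; the extra edge on each side is forbidden precisely because $s_i$ was chosen as a \emph{largest} available streak. Write $m:=|s_i|$ and $s_i=\pair{a_{p+1}}{b_{q+1}},\ddd,\pair{a_{p+m}}{b_{q+m}}$, and let $G_i$ denote the subgraph of $G$ remaining at the start of the $i$-th iteration (after $s_1,\ddd,s_{i-1}$ and everything overlapping them has been removed). By definition $o_i$ consists exactly of the edges of $\OPT$ that still lie in $G_i$ and overlap $s_i$, and any such edge overlaps $s_i$ either through its $A$-endpoint, which then lies among $a_p,\ddd,a_{p+m+1}$, or through its $B$-endpoint, which then lies among $b_q,\ddd,b_{q+m+1}$. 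Letting $\alpha$ and $\beta$ count the edges of $o_i$ of these two kinds, every edge is counted at least once, so $|o_i|\le\alpha+\beta$.

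The core step is to show $\alpha\le m+1$ (and, symmetrically, $\beta\le m+1$). Suppose instead $\alpha=m+2$, so that all of $a_p,a_{p+1},\ddd,a_{p+m+1}$ are matched by edges of $\OPT$ present in $G_i$. Since $\OPT$ is a consecutive matching, the images of the interior nodes $a_{p+1},\ddd,a_{p+m}$ are forced to be consecutive, so (by a one-line induction from the consecutiveness condition) these edges are $\pair{a_{p+1}}{b_c},\ddd,\pair{a_{p+m}}{b_{c+m-1}}$ for some $c$ and form a single streak lying in $G_i$. Consecutiveness then forces the edge incident to $a_p$ to be $\pair{a_p}{b_{c-1}}$, which extends the run to the $m+1$ consecutive edges $\pair{a_p}{b_{c-1}},\pair{a_{p+1}}{b_c},\ddd,\pair{a_{p+m}}{b_{c+m-1}}$, all present in $G_i$. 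Hence $G_i$ contains a streak of length at least $m+1$, contradicting that $s_i$ is a largest streak in $G_i$. The identical argument on the $B$-side (using the symmetric consecutiveness condition) yields $\beta\le m+1$, and therefore $|o_i|\le\alpha+\beta\le 2m+2=2|s_i|+2$.

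The point that requires care, and where the maximality of the greedy choice is really used, is that the extending edge $\pair{a_p}{b_{c-1}}$ together with the whole run genuinely \emph{survives} in $G_i$: only then does its existence contradict the greedy step. This is guaranteed exactly by the definition of $o_i$ as edges of $\OPT$ that do not overlap any earlier $s_1,\ddd,s_{i-1}$, i.e. that remain in $G_i$. The remaining details are routine bookkeeping: checking that the $A$-endpoints overlapping $s_i$ are precisely the $m+2$ indices $p,\ddd,p+m+1$, and that consecutiveness forces the interior nodes into one contiguous run rather than several shorter ones (immediate by induction on the matching condition).
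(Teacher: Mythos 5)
Your proof is correct and follows essentially the same route as the paper: count, for each side of $G$, the edges of $o_i$ ending in the window of $|s_i|+2$ nodes around the endpoints of $s_i$, and use the maximality of the greedy choice to rule out all $|s_i|+2$ window nodes being covered. The only cosmetic difference is that you merge the paper's case analysis (one streak vs.\ several streaks ending in the window) into a single contradiction via the consecutiveness of $\OPT$, which is a valid streamlining rather than a different argument.
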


\begin{proof}
Suppose that the endpoints of $s_i$ at one side (say $A$) of the bipartite graph form
a sequence of nodes $a_j,a_{j+1},\ldots,a_{j+|s_i|-1}$.
Define $\mathcal{E}~=~\{a_{j-1},a_j,\ldots,a_{j+|s_i|-1},a_{j+|s_i|}\}$
(assuming that $a_{j-1}$ and $a_{j+|s_{i}|}$ exist).
We will show that at most $|s_i| + 1$ edges from $o_i$ can end in $\mathcal{E}$.
Then, applying the same reasoning to the other side of the graph will finish the proof.
If $|\mathcal{E}|<|s_i|+2$ then the claim holds.
Otherwise, if $|\mathcal{E}|=|s_i|+2$, there are three cases to consider:
\begin{enumerate}
  \item There are two or more streaks from $o_i$ ending in $\mathcal{E}$. Then they cannot end in all nodes from~$\mathcal{E}$, because
    at least two of them would be overlapping with each other. Thus there is at least one node from~$\mathcal{E}$ that is not an endpoint of edge from $o_i$,
    so there are at most $|s_i| + 1$ of them.
  \item There is one streak from $o_i$ ending in $\mathcal{E}$. Then the streak cannot be larger than $|s_i|$,
    because then the greedy algorithm would have taken the larger streak
    (recall that $o_i$ consists of edges that could have been added to the solution in the $i$-th step).
    Thus there are at most $|s_i|$ edges of $o_i$ ending in $\mathcal{E}$.
  \item There is no streak from $o_i$ ending in $\mathcal{E}$. Then the statement holds trivially. \qedhere
\end{enumerate}
\end{proof}

We still need to specify the algorithm for smaller streaks (consisting of less than $k$ edges),
but before doing so, in the next section we bound the quality of the
solution found by the greedy algorithm.

Let $m$ be the number of steps performed by the greedy algorithm. The algorithm
returns $\ALG=\bigcup_{i=1}^m s_i$ which should be compared with the set
of edges of $\OPT$ that can no longer be taken due to the decisions made by the
greedy algorithm, that is, $\bigcup_{i=1}^m o_i\subseteq OPT$. Using Lemma~\ref{Lemma_oi}
we can compute the desired ratio as follows:

\[
\frac{|\bigcup_{i=1}^{m} o_i|}{|\bigcup_{i=1}^{m} s_i|} = \frac{\sum_{i=1}^{m} |o_i|}{\sum_{i=1}^{m} |s_i|}
\leq \frac{\sum_{i=1}^{m}{(2|s_i|+2)}}{\sum_{i=1}^{m}{|s_i}|} = 2 + \frac{m\cdot 2}{\sum_{i=1}^{m} |s_i|}
\leq 2 + \frac{m\cdot 2}{m\cdot k} = 2+\frac{2}{k}
\]

\noindent where the last inequality holds because all taken streaks consist of at least $k$ edges.

To conclude, the solution $\ALG$ found by the greedy algorithm is at most
$2+\frac{2}{k}$ times smaller than the set of edges from $\OPT$ that is
overlapping with $\ALG$. Informally, on average we discard only a few edges of
$\OPT$ for every edge from $ALG$.
After running the algorithm for $k=1$, there will be no
edges left and thus we have a simple $4$-approximation algorithm. To obtain a better
approximation ratio, we will increase $k$ and focus on the subgraph $G'$ of $G$
consisting of all edges that are not overlapping with any streak $s_{i}$ already taken
by the algorithm (and hence still available). The crucial insight is that
we can analyse the performance of the greedy algorithm on $G\setminus G'$
and the performance of the algorithm for small $k$ on $G'$ separately.
We know that the approximation ratio of the greedy algorithm on $G\setminus G'$
is $2+\frac{2}{k}$ and size of the optimal solution for $G'$ is at least $|\OPT\setminus\bigcup_{i=1}^{m} o_i|$.
Then, due to the definition of $G'$, any solution found
for $G'$ can be combined with $\ALG$ to obtain a solution for the original
instance, so the final approximation ratio is the maximum of $2+\frac{2}{k}$
and the ratio of the algorithm used for $G'$.

\section{Algorithm for Small \texorpdfstring{$k$}{k}}

As stated above, applying the greedy algorithm with $k=1$ immediately implies a $4$-approximation algorithm.
For larger values of $k$ we need another phase to find a solution for the remaining part of the graph.
For $k=2$, we present a simple algorithm based on maximum bipartite matching
(not consecutive) that can be used to obtain a 3-approximation.
For larger values of $k$, we first consider $k=3$ and design a quadratic-time algorithm
based on the local search technique. Then, we move to a general $k$ and develop
a more involved polynomial-time algorithm that achieves $(2+\eps)$-approximation.

\subsection{3-approximation Based on Maximum Matching for \texorpdfstring{$k=2$}{k=2}}

After running \textsc{Greedy}(2) there are no streaks of size 2. Recall that
$G'=(A \dot{\cup} B, E')$ is the subgraph of the original graph $G$ consisting of all edges
that are not overlapping with the edges already taken. Consider the following algorithm:

\begin{enumerate}
  \item Create a bipartite graph $H=(A' \dot{\cup} B', F)$ where:
    \begin{itemize}
    \item $A'=\{a_{(1,2)},a_{(3,4)},\ldots,a_{(n-1,n)}\}$ and similarly for $B'$. In other words, nodes of $A'$
    correspond to merged pairs of neighbouring nodes of $A$ (if $n$ is odd, the last node of $A'$ corresponds
    to a single node of $A$).
    \item $F=\Big\{ \{a_{(2i-1,2i)},b_{(2j-1,2j)}\}: \{a_{2i-1},a_{2i}\}\times\{b_{2j-1},b_{2j}\}\cap E' \ne \emptyset \Big\}$.
    In other words, there is an edge between two merged pairs of nodes
    if there was an edge between a node from the first pair and a node from the second pair.
    \end{itemize}
  \item Find a maximum matching $M'$ in $H$.
  \item For every edge of $M'$, choose an edge of $G'$ connecting nodes from the corresponding pairs
  (if there are multiple possibilities, choose any of them). Let $M$ be the set of chosen edges.
  \item Let $\ALG\gets\emptyset$. Process all edges of $M$ in arbitrary order. For an edge $(a_i,b_j)\in M$:
    \begin{itemize}
    \item remove from $M$ all edges ending in nodes $a_{i-1},a_{i+1},b_{j-1}$ and $b_{j+1}$,
    \item add $(a_i,b_j)$ to $\ALG$.
    \end{itemize}
  \item Return $\ALG$.
\end{enumerate}

Consider the optimal solution $\OPT$. As $G'$ contains no streaks consisting of 2 or more edges,
the endpoints of any two of its edges cannot be neighbouring. Therefore, $\OPT$ can be translated
into a matching in $H$ with the same cardinality, so $|\OPT| \leq |M'|$.

We claim that after including an edge $(a_i,b_j)\in M$ in $\ALG$ at most 2 other edges are removed from
$M$. Assume otherwise, that is, there are 3 such edges. Without loss of generality, one of them
ends in $a_{i-1}$ and one in $a_{i+1}$. Depending on the parity of $i$, edge $(a_{i},b_{j})$
and the edge ending in either $a_{i-1}$ or $a_{i+1}$ correspond
in $H$ to edges ending in the same node of $A'$. This is a contradiction, because all edges
in $M'$ have distinct endpoints. Because initially $|M'|=|M|$, we conclude that $|\ALG| \geq |M'| /3$.

Combining the inequalities gives us $3\cdot |ALG|\ge |M'| \ge |\OPT|$, so the above algorithm
is a 3-approximation for graphs with no streaks of size of at least $2$. Combining it with \textsc{Greedy}(2),
that guarantees approximation ratio of $2+\frac{2}{k}=2+\frac{2}{2}=3$, gives us a 3-approximation
algorithm for the whole problem.

\subsection{2.67-approximation for \texorpdfstring{$k=3$}{k=3}}

For $k=3$ we use procedure \textsc{LocalImprovements} based on the local search technique.
See Algorithm~\ref{alg:local_improvements}.
Essentially the same method was used to obtain the $3.5$-approximation~\cite{DuaItalians}.
The algorithm consists of a number of steps
in which it tries to either add a single edge or remove one edge so that two other edges
can be added.
However, the crucial difference is that
in our case there are no streaks of size greater than $2$ in $G'$. This allows for a better
bound on the approximation ratio.

\begin{algorithm}[h]
\begin{algorithmic}[1]
  \Function{LocalImprovements}{}
  \State $\ALG := \emptyset$
  \While{\texttt{true}}
    \If{$\exists e \notin \ALG$ s.t. $\ALG \cup \{e\}$ is a valid solution}
    \State $\ALG:=\ALG \cup \{e\}$
    \EndIf
    \If{$\exists e_1,e_2 \notin ALG,e'\in \ALG$ s.t. $\ALG \setminus \{e'\} \cup \{e_1,e_2\}$ is a valid solution}
    \State $\ALG:=\ALG \setminus \{e'\} \cup \{e_1,e_2\}$
    \EndIf
    \If{$|\ALG|$ did not increase}
    \State \textbf{break}
    \EndIf
  \EndWhile
  \State \Return $\ALG$
  \EndFunction
\end{algorithmic}
\caption{Local improvements in $\Oh(m^2n^2)$ time.}
\label{alg:local_improvements}
\end{algorithm}

Fix an optimal solution $\OPT$. We want to bound the total number $C$ of overlaps between
the edges from $\ALG$ and $\OPT$.
First, observe that an edge from $\ALG$ can overlap with at most $4$ edges from $\OPT$,
because there are no streaks of size $3$ in the graph. Thus:
\begin{equation}\label{conflicts1}
\begin{gathered}
4\cdot |\ALG| \ge C.
\end{gathered}
\end{equation}
Second, let $k_1$ be the number of edges from $\OPT$ that overlap with exactly one edge from $\ALG$.
Then all other edges from $\OPT$ overlap with at least two edges from $\ALG$
(because otherwise the algorithm would have taken an edge not overlapping with any edge already taken), so:
\begin{equation}\label{conflicts2}
C \ge k_1 + 2\cdot(|\OPT|-k_1) = 2\cdot |\OPT| - k_1.
\end{equation}

\begin{lemma}
\label{le:for_local_improvements}
$k_1 \le |\ALG|.$
\end{lemma}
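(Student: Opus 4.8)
The plan is to build an injection $\phi$ from the set $O_1$ of those $\OPT$-edges that overlap exactly one edge of $\ALG$ (so that $|O_1| = k_1$) into $\ALG$, which immediately gives $k_1 = |O_1| \le |\ALG|$. The map is the natural one: $\phi(o)$ is the unique edge of $\ALG$ overlapping $o$, which is well defined because, as already argued, every edge of $\OPT$ overlaps at least one edge of $\ALG$. Throughout I rely on the observation that a set of edges forms a consecutive matching if and only if every pair of \emph{overlapping} edges in it can coexist, since non-overlapping edges never constrain one another (both the matching and the consecutiveness conditions only involve indices differing by at most one); in particular any two edges of $\OPT$ can coexist.

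The core step is to prove $\phi$ injective using local optimality of $\ALG$. Suppose $o_1 \neq o_2$ lie in $O_1$ with $\phi(o_1) = \phi(o_2) = f$, and assume first that $o_1, o_2 \notin \ALG$. Each $o_i$ overlaps $f$ and no other edge of $\ALG$, so after deleting $f$ it overlaps nothing in $\ALG \setminus \{f\}$ and therefore cannot conflict with any edge there. Since $o_1, o_2 \in \OPT$ they can also coexist with each other. Hence $(\ALG \setminus \{f\}) \cup \{o_1, o_2\}$ is again a consecutive matching: the only overlapping pairs to verify are each $o_i$ against the remaining edges (there are none) and $o_1$ against $o_2$ (fine, both in $\OPT$). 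But this is exactly a move that removes one edge and adds two, increasing $|\ALG|$, contradicting the termination of \textsc{LocalImprovements}. So no two such $o_1, o_2$ can exist outside $\ALG$.

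The delicate point, which I expect to be the main obstacle, is the degenerate case where a preimage lies in $\ALG \cap \OPT$. If $o_1 \in \ALG$, then $o_1$ overlaps itself, forcing $f = \phi(o_1) = o_1$; moreover $o_2 \neq o_1$ cannot lie in $\ALG$, since it would then overlap both itself and $o_1$, two distinct edges of $\ALG$, contradicting $o_2 \in O_1$. Now $o_2$ overlaps only $f = o_1$ within $\ALG$ and can coexist with $o_1$ (both are in $\OPT$), so $\ALG \cup \{o_2\}$ is already a consecutive matching and the single-edge addition of $o_2$ would have strictly improved $\ALG$ — again contradicting termination. This rules out a collision in this case as well, so $\phi$ is injective and $k_1 \le |\ALG|$ follows. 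Note that this argument uses only local optimality; the absence of streaks of size $3$ in $G'$ is what is needed separately, for the bound $4|\ALG| \ge C$.
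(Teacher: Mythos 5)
Your proof is correct and is essentially the paper's own argument: the paper phrases it as a pigeonhole contradiction (if $k_1 > |\ALG|$, two $\OPT$-edges would share the same unique overlapping $\ALG$-edge $e_{del}$, and swapping $e_{del}$ for that pair would have increased the solution), which is exactly your injectivity claim stated in contrapositive form. Your explicit handling of the degenerate case $o_1\in\ALG\cap\OPT$ (where the move collapses to a single-edge addition) is a welcome extra bit of care that the paper elides, but it does not change the approach.
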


\begin{proof}
Suppose that $k_{1} > |\ALG|$.
Then there are two edges $e_1,e_2\in \OPT$ that overlap with only one and the
very same edge $e_{del}\in\ALG$. But then the algorithm would be able to increase size of the solution
by removing $e_{del}$ and adding $e_1$ and $e_2$, so we obtain a contradiction.
\end{proof}

Applying Lemma~\ref{le:for_local_improvements} to \eqref{conflicts2} and combining with \eqref{conflicts1}
we get $4\cdot |\ALG| \ge C \ge 2\cdot |\OPT| - |\ALG|$
and thus $2.5 \cdot |\ALG| \ge |\OPT|$.
Recall that the approximation ratio of the first greedy part of the algorithm is $2+\frac{2}{3}<2.67$,
so the overall ratio of the combined algorithm is also $2.67$.
The algorithm clearly runs in polynomial time as in every iteration of the main loop
the size of $\ALG$ increases by one and is bounded by $n$. In \cite{DuaItalians} the running
time was further optimised to $\Oh(n^{4})$, but in the remaining part of this section we will describe how
to decrease the time to $\Oh(n^{2})$. We will also show how to implement the greedy algorithm
in the same $\Oh(n^{2})$ complexity, thus obtaining an $2.67$-approximation algorithm
in $\Oh(n^{2})$ time.
\newpage
\paragraph{Greedy part in $\Oh(n^2)$ time.}
We show how to implement \textsc{Greedy}($k$) in $\Oh(n^2)$ time.
Recall that in every iteration the algorithm chooses the longest streak in the remaining part of the graph,
includes it in the solution, and removes all edges that overlap with it from the graph.
The procedure terminates if the streak contains less than $k$ edges.

We start with creating a list $L$ of edges $\pair{x}{y}$ sorted lexicographically first by 
$x$ and then by $y$.
This can be done in $\Oh(n^2)$ time using bucket sort and while sorting we can also retrieve
for every edge the edge that would be its predecessor in a streak.
Then we iterate over the edges in $L$ and split them into streaks.
The edges of every streak are stored in a doubly linked list and every edge stores a pointer to its streak.
We also keep streaks grouped by size, that is, $D_s$ contains all streaks of size $s$.
To allow insertions and deletions in $\Oh(1)$ time, $D_s$ is internally also implemented as a doubly linked list,
but in order not to confuse it with the lists storing edges inside a streak, later on we will refer to lists $D_s$ as groups.


Having split all edges into streaks and having grouped streaks by their sizes, we iterate over the groups $D_n,D_{n-1},\ldots,D_k$ and retrieve a streak $s$ from the non-empty group with the largest index.
We add $s$ to the solution and remove all edges overlapping with $s$ from the graph.
Every removed edge either decreases the size of its streak by one or splits it into two smaller
streaks. In both cases, the smaller streak(s) is moved between the appropriate groups.
Removing an edge takes constant time and every edge is removed from the graph at most once.
Similarly, moving or splitting a streak due to a removed edge takes constant time
as the size of the smaller streak can be computed in constant time
by looking at its first and last edge.
Thus, the overall time of the procedure is $\Oh(n^2)$.

\paragraph{Remark.}
Recall that we have generalised the MPSM problem and we are now working with an arbitrary bipartite graph $G$.
However, if $G$ was constructed from an instance of MPSM, then finding the longest streak available corresponds
to finding the longest string that occurs in both $X$ and $Y$ without overlapping with any of the previously
chosen substrings.
Goldstein and Lewenstein~\cite{GoldsteinLewenstein2014} showed how to implement such a procedure in $\Oh(n)$ total time.

\paragraph{Local improvements in $\Oh(n^2)$ time.}

Recall that to analyse the approximation ratio (in Lemma~\ref{le:for_local_improvements}), we only 
need that after termination of the algorithm there are no three edges $e_1,e_2\notin \ALG, e_{del}\in \ALG$
such that $\ALG\setminus\{e_{del}\}\cup\{e_1,e_2\}$ is a valid solution.
At a high level, \textsc{FastLocalImprovements} keeps track of edges that can potentially
increase size of the solution in a queue $Q$.
As long as $Q$ is not empty, we retrieve a candidate edge $e$ from $Q$. First, we verify
that $e\notin \ALG$ and $e$ overlaps with at most one edge from $\ALG$. If $e$ can be
added to $\ALG$, we do so and continue after adding to $Q$ all edges overlapping with $e$.
Otherwise, we check if some other edge $e'$ can be added
while removing another edge $e_{del}$ and at the same time using procedure \textsc{TryAddingPairWith}($e$),
and if so, we add to $Q$ all edges overlapping with one of the modified edges ($e,e'$ and $e_{del}$).
See Algorithm~\ref{alg:fast_local_improvements} and Algorithm~\ref{alg:try_adding_pair_with}.

\begin{algorithm}[H]
\begin{algorithmic}[1]
  \Function{FastLocalImprovements}{}
  \State $Q.\textsc{enqueue}(E)$
  \While {$Q$ is not empty}
    \State $e:=Q.\textsc{dequeue}()$
    \If{$e\in \ALG$ or $e$ overlaps with more than one edge from $\ALG$}
      \State \textbf{continue}
    \EndIf
    \If{$\ALG \cup \{e\}$ is a valid solution}
      \State $\ALG:= \ALG \cup \{e\}$
      \State $Q.\textsc{enqueue}\big(\confl(e)\big)$
      \State \textbf{continue}
    \EndIf
    \State \textsc{TryAddingPairWith}($e$)
  \EndWhile  
  \EndFunction
\end{algorithmic}
\caption{Local improvements in $\Oh(n^2)$ time.}
\label{alg:fast_local_improvements}
\end{algorithm}

The algorithm uses the following data structures and functions:
\begin{itemize}
  \item For every node $v\in G'$, we keep a list of all edges from $E$ ending in $v$ and separately edges of $\ALG$ ending in $v$.
  \item $\close(e)$ is the set of nodes of $G'$ at a distance of at most $1$ from the endpoints of edge $e$. In other words, $\close(e)$ is the set of up to 6 nodes where edges overlapping with $e$ can end.
  \item $\confl(e)$ is the set of edges overlapping with edge $e$. It is computed on the fly, by iterating through edges ending in $v\in \close(e)$.
  \item Queue $Q$ of candidate edges. For every edge in $E$ we remember if it is currently in $Q$ in order not to store any duplicates
  and keep the space usage $\Oh(m)$.
  \item For every node $v\in G'$ we keep a list $L_v$ of edges from $E\setminus \ALG$ that overlap with exactly one edge from $\ALG$ and end in $v$. 
  To keep these lists updated, every time an edge $e=\pair{x}{y}$ is enqueued or added or removed from $\ALG$, we count the edges from $\ALG$ it overlaps with.
  If there is only one of them, we make sure that $e$ is in $L_x$ and $L_y$, otherwise we remove $e$ from $L_x$ and $L_y$.
\end{itemize}

\begin{algorithm}[h]
\begin{algorithmic}[1]
  \Function{TryAddingPairWith}{$e$}
  \State $e_{del} :=$ the only edge from $\ALG$ overlapping with $e$
  \ForEach{$e'$ that can be a neighbour of $e$ in a streak} \label{li:loop_1} \Comment{$\Oh(1)$}
      \If{$\ALG \setminus \{e_{del}\} \cup \{e,e'\}$ is a valid solution}
	\State $\ALG:=\ALG \setminus \{e_{del}\} \cup \{e,e'\}$
	\State $Q.\textsc{enqueue}\big(\confl(e)\cup \confl(e')\cup \confl(e_{del})\big)$ \label{li:enqueue1}
	\State \Return
      \EndIf
    \EndFor
  \ForEach{node $v\in \close(e_{del})\setminus \close(e)$} \label{li:loop_2} \Comment{$\Oh(1)$}
      \ForEach{edge $e'\in L_v$} \Comment{see Lemma~\ref{le:loop_is_short}} \label{li:tricky_line}
	\If{$\ALG \setminus \{e_{del}\} \cup \{e,e'\}$ is a valid solution}
	  \State $\ALG:=\ALG \setminus \{e_{del}\} \cup \{e,e'\}$
	  \State $Q.\textsc{enqueue}\big(\confl(e)\cup \confl(e')\cup \confl(e_{del})\big)$ \label{li:enqueue2}
	  \State \Return
	\EndIf
      \EndFor
    \EndFor
  \EndFunction
\end{algorithmic}
\caption{Adding a pair with edge $e$.}
\label{alg:try_adding_pair_with}
\end{algorithm}

Clearly, after termination of the algorithm there is no triple of edges $e_1,e_2$ and $e_{del}$ that can be used to increase the solution, because
every time an edge is added to or removed from the solution, all of its overlapping edges are enqueued.
It remains to prove that Algorithm~\ref{alg:fast_local_improvements} indeed runs in $\Oh(n^2)$ time.
First, observe that $|\close(e)|\le~6$, so from the definition of overlapping
edges $|\confl(e)| \le |\close(e)| \cdot n\in \Oh(n)$, as there are at most $n$ edges ending in a node.
So, every time the algorithm enqueues a set of edges, there are at most $\Oh(n)$ of them.
As this happens only after increasing the size of $\ALG$, which
can happen at most $n$ times, in total there are $\Oh(n^2)$ enqueued edges.
So it suffices to prove that every time an edge $e$ is dequeued, it takes
$\Oh(1)$ time to check if it can be used to increase the solution. Here we disregard
the time for enqueuing edges due to increasing the size of $\ALG$, as 
it adds up to $\Oh(n^{2})$ as mentioned before. Note that both counting the edges overlapping 
with $e$ and finding the unique edge from $\ALG$ overlapping with $e$ takes $\Oh(1)$ time,
as we just need to check edges from $\ALG$ ending in $\close(e)$.
Similarly, as $\ALG$ is always a valid solution, each validity check takes $\Oh(1)$
time, as we always try to modify a constant number of edges.
By the same argument, loops in lines~\ref{li:loop_1} and~\ref{li:loop_2} take constant number
of iterations, and also:

\begin{lemma}
\label{le:loop_is_short}
There are $\Oh(1)$ iterations of the loop in line~\ref{li:tricky_line} of \textsc{TryAddingPairWith}($e$).
\end{lemma}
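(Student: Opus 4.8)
The plan is to show that, once we reach line~\ref{li:tricky_line}, the edges of $L_v$ that the loop can possibly examine are pinned down to a constant number by the interplay between $v$, $e_{del}$, and $e$. First I would fix the node $v\in\close(e_{del})\setminus\close(e)$ and assume without loss of generality that $v$ lies on the $A$-side, say $v=a_i$ (the $B$-side is entirely symmetric). Since $v$ is an $A$-node lying in $\close(e_{del})$, it is within distance one of the $A$-endpoint of $e_{del}$, so $e_{del}$ overlaps \emph{every} edge of $G'$ ending in $v$. In particular $e_{del}$ overlaps every $e'\in L_v$; but by definition each edge of $L_v$ overlaps exactly one edge of $\ALG$, so that unique edge must be $e_{del}$ itself.

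Next I would use this to characterise precisely which $e'\in L_v$ pass the test ``$\ALG\setminus\{e_{del}\}\cup\{e,e'\}$ is a valid solution''. Recall that \textsc{TryAddingPairWith}($e$) is entered only when $e\notin\ALG$ and $e$ overlaps a single edge of $\ALG$, namely $e_{del}$. Hence, after removing $e_{del}$, neither $e$ nor any $e'\in L_v$ overlaps a remaining edge of $\ALG$, so the modified set is a valid consecutive matching if and only if $e$ and $e'$ do not form a forbidden overlap with each other. Because $v=a_i\notin\close(e)$, the $A$-endpoints of $e$ and $e'$ differ by at least two; therefore $e$ and $e'$ can neither be consecutive nor overlap on the $A$-side, and the pair is valid exactly when $e'$ does not overlap $e$ on the $B$-side.

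The counting step is then immediate: $e'$ overlaps $e$ on the $B$-side only if the $B$-endpoint of $e'$ is one of the at most three nodes within distance one of the $B$-endpoint of $e$. Since all edges of $L_v$ share the $A$-endpoint $v$, at most three of them can fail the validity test. As the loop returns as soon as it finds a valid $e'$, it examines at most these three failing edges before either succeeding or exhausting $L_v$; and if it exhausts $L_v$ without success then every edge of $L_v$ is a failing edge, so $|L_v|\le 3$. In all cases the loop performs $\Oh(1)$ iterations.

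The hard part will be the first reduction, namely the verification that reaching line~\ref{li:tricky_line} genuinely forces the unique $\ALG$-neighbour of every $e'\in L_v$ to be $e_{del}$: this is exactly where the restriction $v\in\close(e_{del})$ is used, and one must be careful that the adjacency of $v$ to $e_{del}$ is on the same side ($A$ or $B$) as $v$ itself. Once that is secured, together with the fact that $v\notin\close(e)$ rules out any consecutive relation between $e$ and $e'$, the remainder is straightforward bookkeeping about the three $B$-side neighbours of the endpoint of $e$.
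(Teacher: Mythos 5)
Your proof is correct and follows essentially the same route as the paper's: reduce validity of $\ALG\setminus\{e_{del}\}\cup\{e,e'\}$ to the single question of whether $e'$ overlaps $e$, note that $v\notin\close(e)$ rules out the $A$-side, and count the at most three bad $B$-endpoints, so at most four iterations suffice. You additionally spell out a detail the paper leaves implicit (that $v\in\close(e_{del})$ forces the unique $\ALG$-edge overlapping each $e'\in L_v$ to be $e_{del}$), which is a welcome clarification rather than a deviation.
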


\begin{proof}
Consider an edge $e'\in L_v$ such that $\ALG':=\ALG \setminus \{e_{del}\} \cup \{e,e'\}$ is not a valid solution.
From the definition of $L_{v}$, $e'$ overlaps only with $e_{del}\in \ALG$, so both 
$\ALG \setminus \{e_{del}\} \cup \{e\}$ and $\ALG \setminus \{e_{del}\} \cup \{e'\}$ are valid solutions.
Thus, the only reason for $\ALG'$ not being valid is that $e'$ overlaps with $e$.
But $v$ is at a distance of $2$ or more from the endpoint of $e$, so $e$ and $e'$ can be overlapping only at the other side of the graph.
There are at most 3 possible endpoints of such $e'$ at the other side, see Figure~\ref{fig:loop_lemma}. Consequently, after checking 4 edges from $L_v$ we will surely find one that can be used to increase $|\ALG|$.
\end{proof}

\begin{figure}[h]
\begin{centering}
\includegraphics{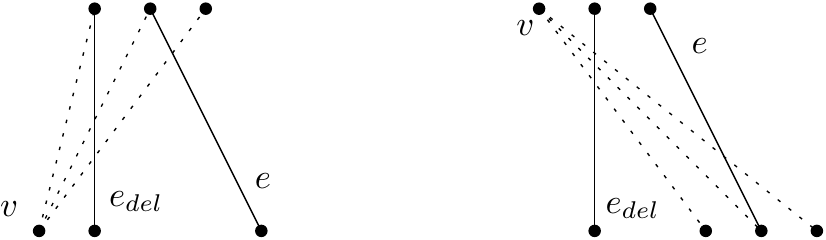}
\caption{Dotted lines show the only $3$ possible edges $e' \in L_v$ that overlap with $e$. 
Among any $4$ edges in $L_v$, at least one can be used to increase $|\ALG|$ and break the loop.}
\label{fig:loop_lemma}
\end{centering}
\end{figure}

To conclude, \textsc{Greedy}$(3)$ with \textsc{FastLocalImprovements} yield
$2.67$-approximation in $\Oh(n^{2})$ time.

\section{\texorpdfstring{$(2+\eps)$}{(2+eps)}-approximation}

Given $\eps>0$ we would like to create a polynomial time $(2+\eps)$-approximation algorithm.
We set $k = \lceil \frac{2}{\eps} \rceil$ and run \textsc{Greedy}$(k)$ to remove all streaks
of size at least $k$ from the graph $G$. From now on we are going to focus on the subgraph $G'$ remaining after the
first greedy phase and let $\OPT$ denote the optimal solution in $G'$.

Let $t=\lceil \frac{4}{\eps}\rceil +1$ and $\ALG$ be the solution found by
\textsc{BoundedSizeImprovements}($t$), see Algorithm~\ref{alg:bounded_size_improvement}.
Similarly to the case $k=3$, the algorithm tries to improve the current solution using
local optimisations, however now the number of edges that we try to add or remove in every
step is bounded by $t$ (that depends on $\eps$).
We want to prove that $(2+\eps)\cdot |\ALG| \geq |\OPT|$. To this end, we assign $(2+\eps)$
units of credit to every edge of $\ALG$. Then the goal is to distribute the
credits from the edges of $\ALG$ to the edges of $\OPT$, so that every edge of $\OPT$ 
receives at least one credit. Alternatively, we can think of transferring credits to the streaks
from $\OPT$ in such a way that a streak consisting of $s$ edges receives at least $s$ credits.
This will clearly demonstrate the required inequality.

\begin{algorithm}[h]
\begin{algorithmic}[1]
  \Function{BoundedSizeImprovements}{$t$}
  \State $\ALG := \emptyset$
  \While{\texttt{true}}
  \ForEach{ $E_{\text{remove}},E_{\text{add}}\subseteq E \text{ such that } |E_{\text{remove}}|<|E_{\text{add}}|\leq t$}
      \State $\ALG' := \ALG\setminus E_{\text{remove}}\cup E_{\text{add}}$
      \If{$\ALG'$ is a valid solution}
	\State $\ALG:=\ALG'$
	\State \textbf{break}
      \EndIf
    \EndFor
    \If{$|\ALG|$ did not increase}
      \State \textbf{break}
    \EndIf
  \EndWhile
  \State \Return $\ALG$
  \EndFunction
\end{algorithmic}
\caption{Improvements of bounded size.}
\label{alg:bounded_size_improvement}
\end{algorithm}

\paragraph{Credit distribution scheme.}
Every edge from $\ALG$ distributes $(1+\frac{\eps}{2})$ credits from each of its two endpoints independently.
Consider an endpoint $v_i$ of an edge from $\ALG$. Let $\ddd, v_{i-1},v_i,v_{i+1},\ddd$ be all nodes at the
corresponding side of the graph $G$.
If there is an edge $e\in\OPT$ ending in $v_i$, then 
$e$ receives 1 credit. Now consider the case when no edge of $\OPT$ ends in $v_i$. If exactly one edge from $\OPT$ ends in $v_{i+1}$ or $v_{i-1}$ then the credit is transferred to that edge. If there are
no edges ending there then the credit is not transferred at all. Finally, if there is an edge $e\in\OPT$ ending at $v_{i-1}$ and another edge
$e'\in\OPT$ ending at $v_{i+1}$, then for the time being neither $e$ nor $e'$ receives the credit. 
In such a situation we say that the node $v_i$ is between the streak containing $e$ and the streak containing $e'$,
call the credit \emph{uncertain} and defer deciding whether it should be transferred to $e$ or $e'$.
Observe that the only case when an edge $e\in\ALG$ overlapping with a streak $s$ does not transfer the credit to $s$ is when
the endpoint of $e$ is between two streaks $s$ and $s'$, see Figure~\ref{fig:scheme}.
Note that two credits can be transferred from $e$ to $s$ if both endpoints of $e$ transfer its credits to~$s$.
The remaining $\frac{\eps}{2}$ credits are not transferred to any specific edge yet. We will aggregate and redistribute them using a more global argument, but first we need some definitions.

\begin{figure}[h]
\begin{centering}
\includegraphics{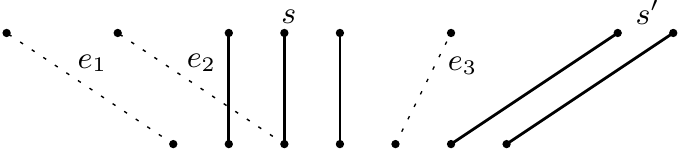}
\caption{Dotted lines denote edges from $\ALG$.
According to the scheme, $e_1$ and $e_2$ transfer a credit to an edge from $s$ and $e_3$ transfers its credit neither to $s$ nor to $s'$, as its endpoint is between $s$ and $s'$.}
\label{fig:scheme}
\end{centering}
\end{figure}

\paragraph{Gaps and balance.}
Define the balance of a streak $s$ from $\OPT$ as the number of credits obtained
in the described scheme (ignoring the uncertain credits) minus the number of edges in $s$.
A gap is an edge of $\OPT$ that has not received any credits yet and
$\gaps(s)$ is the number of gaps in $s$.
Observe that the balance of a streak $s$ is at least $-\gaps(s)$.
After running the greedy algorithm and \textsc{BoundedSizeImprovements}$(t)$, even a stronger property holds:

\begin{lemma}
The balance of every streak is at least $-2$.
\end{lemma}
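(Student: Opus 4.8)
The plan is to show directly that every streak $s$ of $\OPT$ collects at least $|s|-2$ \emph{certain} credits, which is exactly the claim, since by definition the balance of $s$ equals the number of credits it collects (ignoring uncertain ones) minus $|s|$. Write $s=e_1,\ldots,e_m$ with $e_i=\pair{a_{p+i}}{b_{q+i}}$. If $m\le 2$ the bound is immediate from the already observed inequality $\text{balance}(s)\ge -\gaps(s)\ge -m$, so I would assume $m\ge 3$.

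The core of the argument is a single local move. Consider the \emph{inner} sub-streak $s^-=\{e_2,\ldots,e_{m-1}\}$ and let $R$ be the set of edges of $\ALG$ that overlap $s^-$. The crucial observation is that every edge of $R$ has an endpoint on one of the \emph{own} nodes of $s$, namely $a_{p+1},\ldots,a_{p+m}$ or $b_{q+1},\ldots,b_{q+m}$: the sub-streak occupies only $a_{p+2},\ldots,a_{p+m-1}$ (and the symmetric $B$-nodes), so an edge overlapping it can reach at most the flanking nodes $a_{p+1},a_{p+m},b_{q+1},b_{q+m}$, all of which already belong to $s$. In particular the outer boundary nodes $a_{p},a_{p+m+1},b_{q},b_{q+m+1}$ stay untouched. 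I would first verify that $(\ALG\setminus R)\cup s^-$ is a valid consecutive matching (removing $R$ frees every node adjacent to $s^-$, so the added streak conflicts with nothing that remains), that $R\subseteq\ALG$, and that $s^-$ is disjoint from $\ALG\setminus R$, so this move changes the size by exactly $(m-2)-|R|$.

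Next I would invoke local optimality. Since $m-2\le m<k\le t$, the pair $E_{\text{remove}}=R$, $E_{\text{add}}=s^-$ is one of the moves examined by \textsc{BoundedSizeImprovements}$(t)$; as it terminated without improving, we cannot have $|R|<m-2$, hence $|R|\ge m-2$. Now I convert this into credits. Every edge of $R$ occupies at least one own node of $s$, and distinct edges occupy distinct nodes (both $\ALG$ and $\OPT$ are matchings), so the number of own nodes carrying an $\ALG$-edge is at least $|R|\ge m-2$. Finally, whenever an $\ALG$-edge ends at an own node $a_{p+i}$ (resp.\ $b_{q+i}$), the credit distribution scheme sends one credit to $e_i\in s$, because the $\OPT$-edge $e_i$ ends at that very node; this is the first case of the scheme, so the credit is certain rather than uncertain. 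Thus $s$ collects at least $m-2$ certain credits and its balance is at least $-2$.

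I expect the main obstacle to be choosing the right move rather than the bookkeeping. The natural attempt of adding \emph{all} of $s$ and removing everything overlapping it drags in the four outer boundary nodes $a_{p},a_{p+m+1},b_{q},b_{q+m+1}$, whose $\ALG$-edges need not credit $s$ at all, and this only yields $\text{balance}(s)\ge -4$. Stripping \emph{both} end edges and adding only the inner sub-streak $e_2,\ldots,e_{m-1}$ is exactly what confines the removed edges to the own nodes of $s$, so that each removal is paid back by a credit; the two discarded ends are precisely the source of the additive $2$. The rest is routine: confirming the validity of the swap, that its size $m-2$ stays within $t$, and that an occupied own node always yields a certain (non-uncertain) credit.
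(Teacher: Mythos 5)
Your proof is correct, but it follows a genuinely different route from the paper's. The paper splits $s=Ag_1Mg_2B$ at the \emph{first and last gap}, observes that $A$ and $B$ have non-negative balance simply because they contain no gaps, and invokes local optimality only for the middle part $M$: if $M$ had negative balance, \textsc{BoundedSizeImprovements} could swap $M$ in for the $\ALG$-edges overlapping it. You instead strip the two extremal edges \emph{positionally}, apply local optimality to the inner sub-streak $s^-=e_2,\ldots,e_{m-1}$ to get $|R|\ge m-2$, and convert that bound into certain credits via the observation that any $\ALG$-edge overlapping $s^-$ must end on an own node $a_{p+1},\ldots,a_{p+m}$ or $b_{q+1},\ldots,b_{q+m}$ of $s$, hence pays a certain credit to $s$ by the first case of the scheme. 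Your version is more self-contained: it makes explicit the credit-counting that the paper compresses into the single sentence about replacing edges with $M$, and it correctly diagnoses why the two end edges must be discarded (edges overlapping $e_1$ or $e_m$ can end on the outer boundary nodes $a_p,a_{p+m+1},b_q,b_{q+m+1}$ and credit nothing in $s$). The side conditions you rely on all check out: $m<k\le t$ because $G'$ contains no streak of length $k$ after \textsc{Greedy}$(k)$ and $t=\lceil 4/\eps\rceil+1\ge k$, and $(\ALG\setminus R)\cup s^-$ is a valid consecutive matching because $R$ is by definition \emph{all} of $\ALG$ overlapping $s^-$. What the paper's gap-based decomposition buys, and yours does not, is Corollary~\ref{split_streak}: the subsequent component analysis needs that a balance-$(-2)$ streak splits into $Ag_1$ and $g_2B$ of balance $-1$ around a balance-$0$ middle $M$, and that structure falls out of the paper's proof but not of yours. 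As a proof of the stated lemma alone, yours is complete.
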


\begin{proof}
Consider a streak $s$. If there are less than $2$ gaps in $s$ then the claim holds.
Otherwise, let $g_1$ and $g_{2}$ be the first and the last gap in $s$, so that we can write
$s=Ag_1Mg_2B$, see Figure~\ref{fig:split}. Note that the balance of both $A$ and $B$ is non-negative, as from the definition
there are no gaps inside, so every edge there receives at least one credit.
However, there might be multiple gaps in $M$. Suppose that the balance of $M$ is negative.
But the size of $M$ is smaller than $k<t$, so \textsc{BoundedSizeImprovements}$(t)$ would have
replaced a subset of edges from $\ALG$ with $M$ to increase size of the solution. Therefore,
the balance of $M$ is nonnegative. Finally, observe that the balance of $s$ is equal to the sum of
balances of $A,M$ and $B$ minus $2$ (for the gaps $g_1$ and $g_2$), so it is at least $-2$ in total.
\end{proof}

The following corollary that follows from the above proof will be useful later:

\begin{corollary}
\label{split_streak}
Every streak $s$ with balance $-2$ can be represented as $s=Ag_1Mg_2B$ where $g_1$ and $g_2$
are the first and last gap of $s$, respectively. The balance of $Ag_{1}$ and $g_{2}B$ is $-1$
while the balance of $M$ is 0.
\end{corollary}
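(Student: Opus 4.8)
The plan is to read this off as a tightening of the preceding lemma's argument, in which every inequality used there is forced to become an equality. I would start from the decomposition $s=Ag_1Mg_2B$ already produced in the lemma's proof, where $g_1$ and $g_2$ are the first and last gaps of $s$ (these exist precisely because balance $-2$ entails at least two gaps). The one point I would make explicit is that balance is additive over disjoint sets of edges: both the number of credits received (ignoring uncertain credits) and the number of edges are additive, so over any partition of $s$ into consecutive pieces the balances sum. Since each of $g_1,g_2$ is a gap, i.e.\ an edge receiving no credits, each contributes exactly $-1$, whence
\[
-2=\mathrm{balance}(s)=\mathrm{balance}(A)+\mathrm{balance}(M)+\mathrm{balance}(B)-2,
\]
so that $\mathrm{balance}(A)+\mathrm{balance}(M)+\mathrm{balance}(B)=0$.

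Next I would invoke the three lower bounds established in the lemma: $\mathrm{balance}(A)\ge 0$ and $\mathrm{balance}(B)\ge 0$ because those parts contain no gaps, so each of their edges receives at least one credit; and $\mathrm{balance}(M)\ge 0$, since otherwise \textsc{BoundedSizeImprovements}$(t)$ could exchange a subset of $\ALG$ for the at most $k<t$ edges of $M$ and enlarge the solution. Three non-negative quantities summing to zero must each vanish, so $\mathrm{balance}(A)=\mathrm{balance}(M)=\mathrm{balance}(B)=0$. Additivity then gives $\mathrm{balance}(Ag_1)=\mathrm{balance}(A)-1=-1$ and $\mathrm{balance}(g_2B)=\mathrm{balance}(B)-1=-1$, while $\mathrm{balance}(M)=0$, exactly as claimed.

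I do not anticipate a genuine obstacle, as the whole argument is a rereading of the lemma's proof. The only thing demanding care is pinning down the additivity of balance and verifying that the uncertain credits are discarded uniformly across all five pieces, so that the split of $\mathrm{balance}(s)$ into the balances of $A,g_1,M,g_2,B$ is legitimate.
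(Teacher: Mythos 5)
Your proposal is correct and matches the paper's intent exactly: the paper offers no separate proof, stating only that the corollary "follows from the above proof," and your argument is precisely that reading-off — reuse the decomposition $s=Ag_1Mg_2B$, note that balance is additive and each gap contributes $-1$, and conclude from the three non-negativity bounds that $\mathrm{balance}(A)=\mathrm{balance}(M)=\mathrm{balance}(B)=0$ when the total is $-2$. Making the additivity of balance explicit is a small but welcome clarification over the paper's implicit treatment.
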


\begin{figure}[h]
\begin{centering}
\includegraphics{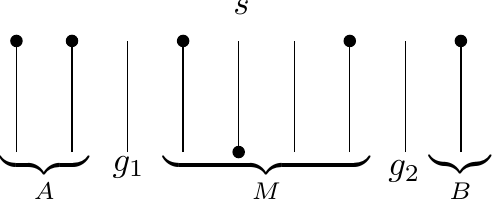}
\caption{Black dots denote endpoints of edges from $\ALG$, $g_1$ and $g_2$ are the first and the last gap, respectively.
Balance of the streak $s$ is $-3$.}
\label{fig:split}
\end{centering}
\end{figure}	

\paragraph{Analysis of the scheme.}
We construct an auxiliary multi-graph $H$, where the vertices are streaks of $\OPT$
with balance at least $-1$. Streaks with balance $-2$ are split into two smaller streaks (called substreaks)
with balance $-1$ as explained in Corollary~\ref{split_streak}.
We create an edge between two streaks in $H$ when they both overlap with an endpoint of an edge from $\ALG$.
In other words, when an edge $e$ from $\ALG$ has an endpoint $x$ overlapping with two streaks of $\OPT$, then there is an edge in $H$ between the vertices corresponding to these streaks, see Figure~\ref{fig:edge}.
Observe that then there is no edge of $\OPT$ ending in $x$ and there can be at most two edges
between any pair of streaks in $H$.

\begin{figure}[h]
\begin{centering}
\includegraphics{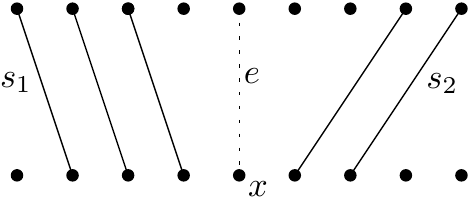}
\caption{If there is an endpoint $x$ of edge $e\in \ALG$ that is between two streaks $s_1,s_2$ of $\OPT$ then we add an edge between $s_1$ and $s_2$ in $H$.}
\label{fig:edge}
\end{centering}
\end{figure}

Now we will show that for every connected component of $H$ there are enough credits to distribute
at least one credit to every edge from $\OPT$ in the component.
The intuition behind considering the connected components of $H$ is that we have deferred distribution of the uncertain credits,
and now a connected component is a set of streaks that need to decide together how to spend those uncertain credits.
At a high level, for every connected component $\CC$ of $H$ there will be two cases to consider.
First, if the balance of $C$ is non-negative, then we are done.
Otherwise, we will show that the balance of $\CC$ is equal to $-1$.
We also know that the component is so big that \textsc{BoundedSizeImprovements}
was not able to increase the solution.
From this we will conclude that, by gathering the remaining $\frac{\eps}{2}$ credits 
together, it is possible to cover the deficit.

Consider one connected component $\CC$ on $w$ vertices.
We want to prove that there are at least $w$ credits transferred to all edges of $\CC$ in total.
From the construction we have that every vertex of $\CC$ has balance of at least $-1$.
Moreover, as the component is connected, there are at least $w-1$ edges, each adding one
uncertain credit. Thus, the total balance of the whole component (including the
uncertain credits) is at least $-1$.
Observe that the only case when the total balance of the component is $-1$ is a tree
(with exactly $w-1$ edges) where every node has balance of $-1$. In all other cases
the balance is non-negative already.

We denote by $K_\CC$ the set of edges of $\OPT$ from all vertices of $\CC$ (recall that
they correspond to original streaks with balance -1 and substreaks). We also define an auxiliary set $M_\CC$ that consists of
the middle parts $M$ of the original streaks.
More precisely, for every streak $s$ of balance $-2$, if it was a part of $\CC$ (due to the substreak
$Ag_1$ or $g_2B$, where $s=Ag_1Mg_2B$), we add to $M_\CC$ all edges from $M$.
From Corollary~\ref{split_streak}, the balance of every such $M$ is $0$.
Now consider the following set of edges $X_\CC=K_\CC \cup M_\CC$.
There are two cases to consider depending on how many credits have been transferred to $X_\CC$:

\begin{enumerate}
\item If there are at least $c \geq \frac{4}{\eps}$ credits transferred to the edges of $X_\CC$ (each credit from an endpoint of an edge from $\ALG$),
then we can use half of the remaining $\frac{\eps}{2}$ credit of each endpoint and transfer it
to the component. Note that for each credit from those $c$ already assigned to $X_\CC$ there is one endpoint still having 
additional $\frac{\eps}{4}$ credit that can be spent on $X_\CC$. We can use only half of the remaining
$\frac{\eps}{2}$ credit because
some edges (from the middle parts of original streaks) can belong to both $X_\CC$ and $X_{\CC'}$ for two different components $\CC$ and $\CC'$, see Figure~\ref{fig:split2}, and they might need to transfer additional credit to both of them.
Thus, for each of the $c$ credits we transfer additional $\frac{\eps}{4}$ credit, so in total we transfer at least one full credit,
which is enough to cover the deficit of the component.

\begin{figure}[t]
\begin{centering}
\includegraphics[width=\textwidth]{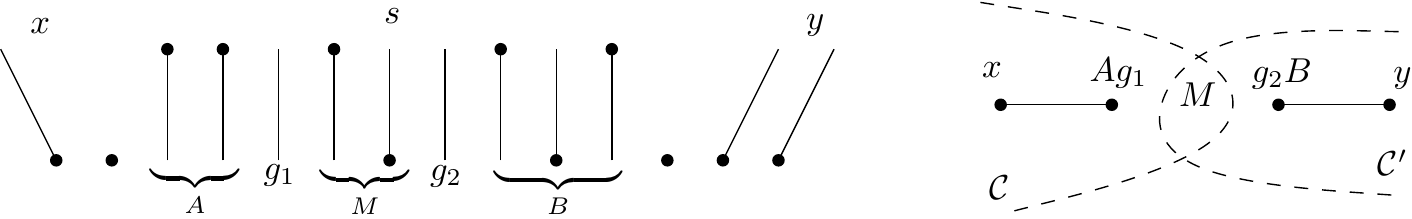}
\caption{As there is an \textit{uncertain} credit between streaks $x$ and $Ag_1$, there will be an edge between them in $H$, so they will be in a connected component $\CC$ of $H$.
Similarly for $g_2B$ and $y$ in~$\CC'$.
Observe that the middle part $M$ of the split streak $s$ is accounted for in both $M_{\CC}$ and $M_{\CC'}$.}
\label{fig:split2}
\end{centering}
\end{figure}

\item In the second case, the edges from $X_\CC$ received less than $\frac{4}{\eps}$ credits, so there are less than $\frac{4}{\eps}+1$ edges from $\OPT$
(recall that the overall balance of the component is $-1$).
Note that if we add all edges from $X_\CC$ and remove all edges from $\ALG$ that have transferred credits to the edges from $X_\CC$,
 the size of the solution will increase as earlier the overall balance was negative.
 The solution will still be valid, because we have removed all edges from $\ALG$ overlapping with the edges of $X_\CC$. Also for the split streaks,
we took edges up to (but not including) a gap which from the definition does not share an endpoint with an edge from $\ALG$.
Furthermore, as the size of $X_\CC$ is at most $\frac{4}{\eps} + 1 \leq t$, it 
would have been considered as the set $E_{add}$ of edges to be checked by our algorithm.
Thus, this case is impossible, as we would have been able to improve the current solution.
\end{enumerate}

To conclude, every connected component containing $w$ edges receives at least $w$ credits,
so $(2+\eps)\cdot |\ALG| \ge |\OPT|$. As the approximation ratio of the first greedy part
is also $(2+\eps)$, as explained before the overall algorithm is an $(2+\eps)$-approximation for MPSM.
It remains to analyse its time complexity.
Let $m$ denote the number of edges of $G'$. There are at most $n$ steps of the
algorithm, as in each of them size of the solution increases by at least one and is bounded by $n$.
There are $\binom{m}{t} \in \Oh(m^t)$ candidates for $E_{add}$ and $E_{remove}$ and we can check in $\Oh(m)$ time if a given solution is valid. In total, substituting $t=\lceil\frac4\eps\rceil +1$ the total time complexity is
$\Oh(m^{2t+1})= \Oh(n^{4t+2})=\Oh(n^{\frac{16}{\eps}+6})=n^{\Oh(1/\eps)}$.

\begin{theorem}
Combining the greedy algorithm with local improvements yields a $(2+\eps)$-approximation for 
MCBM in $n^{\Oh(1/\eps)}$ time, for any $\eps>0$.
\end{theorem}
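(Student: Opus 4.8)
The plan is to fix the two parameters so that each phase contributes a ratio of at most $2+\eps$, and then argue that the two partial solutions compose without interference. Concretely, I would take $k=\lceil 2/\eps\rceil$, so that the greedy analysis of Section~\ref{greedy_alg} guarantees that on the already-processed part $G\setminus G'$ of the graph at most $2+\frac{2}{k}\le 2+\eps$ edges of $\OPT$ are made unavailable per edge retained in $\ALG$. On the residual subgraph $G'$, which by construction contains no streak of size $\ge k$, I would run \textsc{BoundedSizeImprovements}$(t)$ with $t=\lceil 4/\eps\rceil+1$ and prove $(2+\eps)\cdot|\ALG|\ge|\OPT|$ there. Because every edge still available in $G'$ is disjoint in the overlap sense from every streak already taken, any solution found on $G'$ merges with the greedy solution into a single valid consecutive matching, so the overall ratio is the maximum of the two partial ratios, namely $2+\eps$.

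For the residual instance the heart of the argument is the credit-distribution scheme: assign $(2+\eps)$ credits to each edge of $\ALG$, route $1+\frac{\eps}{2}$ of them from each of its two endpoints toward the overlapping $\OPT$-streak, and hold back the remaining $\frac{\eps}{2}$ per endpoint for a later global step. The \emph{balance} of a streak records the surplus or deficit of the routed credits, and the key structural fact is that local optimality forces every balance to be at least $-2$ (with a $-2$ streak splitting as $Ag_1Mg_2B$ where $M$ has balance $0$, by Corollary~\ref{split_streak}): a streak with larger internal deficit would expose a set $M$ of fewer than $k<t$ edges whose insertion strictly improves $\ALG$, contradicting termination of \textsc{BoundedSizeImprovements}$(t)$.

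I would then form the auxiliary multigraph $H$ whose vertices are the (possibly split) streaks of balance $\ge-1$ and whose edges record the deferred \emph{uncertain} credits, so that each connected component is precisely the set of streaks that must jointly decide how those credits are spent. For a component on $w$ vertices, connectivity supplies at least $w-1$ uncertain credits, so its total balance is at least $-1$, with the only deficient case being a tree in which every vertex has balance exactly $-1$. To cover that single unit of deficit I would split on how many credits the set $X_\CC=K_\CC\cup M_\CC$ has absorbed: if at least $\frac{4}{\eps}$, then harvesting $\frac{\eps}{4}$ of held-back credit per absorbed credit yields a full unit; otherwise $|X_\CC|\le\frac{4}{\eps}+1\le t$, so $X_\CC$ would have been an admissible improving set $E_{\mathrm{add}}$, again contradicting local optimality. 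I expect the delicate point, and the main obstacle, to be the accounting in the first case: since a middle part $M$ can lie in two distinct components, only half of the held-back $\frac{\eps}{2}$ of such an edge may be charged to each, which is exactly why $\frac{\eps}{4}$ per credit is the right allowance and why the threshold must be $\frac{4}{\eps}$ rather than $\frac{2}{\eps}$.

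Finally, the running time follows by noting that $\ALG$ grows by at least one in each of at most $n$ iterations, while a single iteration scans $\Oh(m^{t})$ choices of $E_{\mathrm{add}}$ and $\Oh(m^{t})$ choices of $E_{\mathrm{remove}}$, each validated in $\Oh(m)$ time; substituting $t=\lceil 4/\eps\rceil+1$ and $m=\Oh(n^{2})$ collapses the total to $n^{\Oh(1/\eps)}$, which completes the proof.
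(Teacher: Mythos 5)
Your proposal is correct and follows essentially the same route as the paper: the same choice of $k=\lceil 2/\eps\rceil$ and $t=\lceil 4/\eps\rceil+1$, the same credit-distribution scheme with $1+\frac{\eps}{2}$ routed per endpoint and $\frac{\eps}{2}$ held back, the same auxiliary multigraph over streaks of balance $\ge -1$ with the tree case as the only deficit, the same two-case analysis of $X_\CC$ (including the $\frac{\eps}{4}$ accounting for middle parts shared by two components), and the same $\Oh(m^{2t+1})$ running-time bound. No substantive differences from the paper's argument.
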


\begin{corollary}
There exists a $(2+\eps)$-approximation algorithm for MPSM running in $n^{\Oh(1/\eps)}$ time,
for any $\eps>0$.
\end{corollary}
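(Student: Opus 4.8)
The plan is to assemble the two phases---\textsc{Greedy}$(k)$ with $k=\lceil 2/\eps\rceil$ followed by \textsc{BoundedSizeImprovements}$(t)$ with $t=\lceil 4/\eps\rceil+1$ on the residual graph $G'$---and to argue that each contributes a factor of at most $2+\eps$, so their combination does too. For the greedy part, Lemma~\ref{Lemma_oi} already gives that the edges of $\OPT$ lost to greedy choices number at most $(2+\frac2k)$ times the edges greedily taken, and $2+\frac2k\le 2+\eps$ by the choice of $k$. Since $G'$ is exactly the set of edges surviving the greedy phase, any consecutive matching found inside $G'$ can be appended to the greedy solution without creating overlaps; it therefore suffices to show separately that the local-search solution on $G'$ is a $(2+\eps)$-approximation of $\OPT$ restricted to $G'$, and the overall ratio is then the maximum of the two, namely $2+\eps$.

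The heart of the argument is the credit analysis on $G'$. First I would fix the distribution scheme: each edge of $\ALG$ sends $1+\frac\eps2$ credits from each endpoint, routing one credit to the incident $\OPT$-edge when it exists, deferring the \emph{uncertain} credits at nodes lying between two streaks, and holding back $\frac\eps2$ per endpoint. The key structural fact is that after the local search terminates the balance of every $\OPT$-streak is at least $-2$: if a streak $s=Ag_1Mg_2B$ had a negative-balance middle $M$, then since $M$ has size below $k<t$, \textsc{BoundedSizeImprovements}$(t)$ could have swapped $M$ in, a contradiction. I would then build the auxiliary multigraph $H$ on streaks of balance $\ge-1$ (splitting balance-$(-2)$ streaks into two balance-$(-1)$ substreaks per Corollary~\ref{split_streak}), placing an edge whenever an endpoint of an $\ALG$-edge sits between two streaks, so that each such edge carries exactly one uncertain credit.

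The decisive step is showing that every connected component $\CC$ of $H$ receives at least as many credits as it has $\OPT$-edges. Because $\CC$ is connected on $w$ vertices, each of balance $\ge-1$, with at least $w-1$ uncertain-credit edges, its total balance is $\ge-1$, with equality only when $\CC$ is a tree of balance-$(-1)$ vertices. In that deficit case I would split on the number $c$ of credits already delivered to $X_\CC=K_\CC\cup M_\CC$: if $c\ge\frac4\eps$, then each of these credits was supplied by an endpoint that still retains a usable $\frac\eps4$ of its residual $\frac\eps2$ (we use only half, to avoid double-spending the middle parts $M$ shared between two components), and harvesting these yields a full extra credit that erases the $-1$ deficit; if $c<\frac4\eps$, then $X_\CC$ has fewer than $\frac4\eps+1\le t$ edges, so adding all of $X_\CC$ while deleting the $\ALG$-edges that fed it is a legal move of \textsc{BoundedSizeImprovements}$(t)$ that strictly enlarges the solution (the negative balance means fewer edges are removed than added)---impossible. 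This yields $(2+\eps)\,|\ALG|\ge|\OPT|$ on $G'$. The main obstacle is precisely this component accounting: correctly tracking the middle parts shared between two components and justifying the $\frac\eps4$ bookkeeping so that no credit is spent twice while every deficit is still covered.

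Finally I would bound the running time and deduce the corollary. Each iteration enumerates $\binom{m}{t}=\Oh(m^t)$ candidate pairs $(E_{\text{remove}},E_{\text{add}})$ and spends $\Oh(m)$ per validity check; the solution grows by at least one per successful iteration and is bounded by $n$, giving $\Oh(m^{2t+1})=n^{\Oh(1/\eps)}$ in total. The corollary then follows with no extra work: an MPSM instance produces, as in the Preliminaries, a bipartite graph $G$ whose consecutive matchings correspond to proper letter mappings, so MPSM is the special case of MCBM obtained from that $G$, and the same algorithm and bound apply verbatim.
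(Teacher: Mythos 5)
Your proposal is correct and follows essentially the same route as the paper: the two-phase combination of \textsc{Greedy}$(k)$ and \textsc{BoundedSizeImprovements}$(t)$, the credit scheme with uncertain credits and the auxiliary multigraph $H$, the two-case component analysis, the $n^{\Oh(1/\eps)}$ running time, and the final observation that MPSM is an instance of MCBM via the graph construction from the Preliminaries. No substantive differences to report.
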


\bibliography{p18-Dudek}

\end{document}